\newtheorem{theorem}{Theorem}
\newtheorem{lemma}[theorem]{Lemma}
\newtheorem{corollary}[theorem]{Corollary}
\renewcommand{\section}{
	\@startsection
	{section}
	{1}
	{0pt}
	{1.1\baselineskip}
	{0.2\baselineskip}
	{\sc \centering}
}
\renewcommand{\subsection}{
	\@startsection
	{subsection}
	{1}
	{0pt}
	{1.1\baselineskip}
	{0.2\baselineskip}
	{\sc \centering}
}
\renewcommand{\subsubsection}{
	\@startsection
	{subsubsection}
	{1}
	{0pt}
	{1.1\baselineskip}
	{0.2\baselineskip}
	{\sc \centering}
}
\begin{document}

\title{\large\sc Can Limited Liability Increase Stability for Banks: A Dynamic Portfolio Approach}
\normalsize
\author{
\sc{Deb Narayan Barik} \thanks{Department of Mathematics, Indian Institute of Technology Guwahati, Guwahati-781039, India, e-mail: d.narayan@iitg.ac.in}
\and 
\sc{Siddhartha P. Chakrabarty} \thanks{Department of Mathematics, Indian Institute of Technology Guwahati, Guwahati-781039, India, e-mail: pratim@iitg.ac.in, Phone: +91-361-2582606}}

\date{}
\maketitle
\begin{abstract}
We present a novel approach for the bank's decision problem, incorporating Limited Liability in the objective function. Accordingly, we consider continuous time models, with and without Limited Liability. We compare the solutions of these two models to demonstrate the effect of inclusion of Limited Liability. To solve the problem with the objective function incorporating Limited Liability, we approximate the payoff function to another set of functions for which we have closed-form solutions. Then, we show that the solution with Limited Liability incorporates less risky assets, while simultaneously increasing the resilience of the bank. After that, we use the metric of $Distance~to~Default$, from the KMV Model, to analyze the bank's resiliency, by considering that the interest rate follows the Vasicek model. Finally, we illustrate the results obtained with a numerical example.
\end{abstract}

\section{Introduction}
\label{Sec_Four_Introduction}

The continuous-time portfolio selection problem, was first established by Merton \cite{Merton1973,Merton74}, wherein the model was set up to optimize the portfolio selection problem, in continuous time, from the perspective of an investor. The problem formulation sought to determine the portfolio, that needs to be held between the current time and the finite final time $T$, via maximization of the expected terminal utility of the wealth level. The portfolio under this model framework is assumed to be comprised of a risk-free asset and a finite number of risky assets, which are accessible for the purpose of investment. The optimization problem (to determine the optimal portfolio) reduces to a stochastic control problem. One drawback of this seminal model was that it assumed the risk-free rate of interest to be a constant. This shortcoming (pertaining to the constant risk-free interest rate) was addressed by Vasicek \cite{Vasicek1977}, and Ho and Lee \cite{Ho1986}. Both these articles consider the interest rate to be following a stochastic process, that is both free of arbitrage and matches the current yield curve (implied by the pricing of interest rate derivatives). Vasicek \cite{Vasicek1977} had considered a mean-reverting process for the interest rate and provides an analytic solution for the interest rate derivatives. This, in turn, has led to the determination of the bond pricing formula. The price of the bond can be correlated with the risk-premium in case of interest rate risk. In a more recent work \cite{Korn2002}, the authors have revisited the portfolio optimization problem, this time for a portfolio comprising of a savings account and some risky bonds.

In the banking structure, there exists the concept of Limited Liability, which protects the owners of the bank (or by extension, any Limited Liability firm) form personal liability, in the event o bankruptcy or insolvency, that is, in such scenarios, the depositors cannot lay any claim on the owner's personal assets, for their deposit exposure. The interested reader may refer to \cite{Carney1998}, for a historical discourse on the concept of Limited Liability. The protection accorded by Limited Liability (naturally) causes the problem of moral hazard \cite{Sinn2001, Biais2010}. Then the bank's decision problem with Limited Liability, in conjunction with capital requirements mandated in Basel III, has been studied in \cite{Acosta2020}. Subsequently, in \cite{Barik2022}, the authors have demonstrated that inclusion of Limited Liability, in the decision model (in discrete time) can lead to decreased leveraged risk. With Limited Liability protection, the required amount of risk is less, as compared to the case without Limited Liability protection, when it comes to attaining a target.

The capital requirements as mandated in Basel I and Basel II \cite{Basel1,Basel2}, were restricted to risk-based capital measures. However, the 2008 Global Financial Crisis (GFC) led the regulators to introduce capital requirements to disincentivize banks and financial institutions from taking on unacceptable levels of leverage. Accordingly, Basel III \cite{Basel3Leverage} saw the introduction of Leverage Ratio (LR), a non-risk-based risk metric, to complement the existent risk-based metrics. The definition of LR is given by:
\[\text{LR}:=\frac{\text{Tier 1 capital}}{\text{Total exposure measure}} ,\] 
with this ratio being a counter-cyclical capital measure. Counter-cyclical measures are reportedly more effective in preempting and reducing systematic risk, as well as credit bubbles. Blum \cite{Blum2008} has demonstrated that the risk-based capital requirement, in conjunction with the LR, is a better criterion for capital requirement. D'Hulster \cite{D09} has discussed about various aspects of the LR, in the context of bank's leverage. Hildebrand \cite{Hildebrand2008} has claimed that implementing risk-based capital requirements with LR lowers the leverage of the bank, which eventually decreases the chance of default of the bank. For the purpose of this work we have considered the loan portfolio of a bank which fulfills both the risk-based capital requirement and the LR criterion.

The KMV model \cite{Valavskova2014} is a credit risk model that estimates the probability of default in the framework of Merton's structural credit risk model, which treats the firm's equity as a call option on its assets. Default occurs when the market value of assets falls below the default point. This model use a metric, namely, Distance to Default ($DD$) to measure the company's probability of default. $DD$ measures how close a company is to defaulting on its debt, with a higher value of $DD$ implying a lower risk of default.
 
\textit{The focus of the work} is to study the effect of inclusion of Limited Liability in model, for a continuous time setup. Accordingly, we construct two models, one with and another without Limited Liability, involving the maximization of expected utility function of terminal wealth. After that we study the stability of the bank against these solutions using the $DD$ from the KMV model. The outline of the paper is as follows. We present a detailed description of the models in Section \ref{Sec_Four_Formulation}. Then, in Section \ref{Sec_Four_Analysis} we derive all the theoretical results, which are then illustrated by an example in Section \ref{Sec_Four_Example}. Finally, Section \ref{Sec_Four_Conclusion} summarizes the main takeaways and concluding remarks on the work reported in the paper.

\section{Formulation of the Model}
\label{Sec_Four_Formulation}

In this Section, we briefly discuss the formulation of the Hamilton-Jacobi-Bellman (HJB) equation in the context of the problem under consideration. Accordingly, two optimization problems are formulated, one with Limited Liability and the other without Limited Liability. As stated previously, we consider the DD metric in the paradigm of the KMV model used to study the resilience of the firm. Also, the interest rate risk is taken into consideration for this model. Accordingly, we have considered the interest rate as having followed the Vasicek model, an all of which have been described in this Section.
 
\subsection{HJB Equation Formulation of an Optimization Process}

In this Subsection, we discuss briefly how the HJB equation is formulated for an optimization process. Let $\left\{(W(t)\right\}_{t \in [0, \infty)}$ be a $m$-dimensional Brownian motion. Further, let $Y(t)$ be a $n$-dimensional state process, whose evolution is given by the stochastic differential equation (SDE) as follows: 
\begin{equation}
\label{Eq_4.1}
dY(t)=\Lambda(t,Y(t),u(t))dt+\Sigma(t,Y(t),u(t))dW(t),
\end{equation}
with an initial condition of $Y(t_{0})=y_{0}$. Here $u(t)$ is a $d$-dimensional control process, $\Lambda(\cdot)$ is the drift, and $\Sigma(\cdot)$ is the instantaneous standard deviation. Let $[t_{0},t_{1}]$ with $0 \leq t_{0} < t_{1} < \infty$ be the pertinent time window. Our goal is to determine an admissible control $u(\cdot)$ such that for each initial value $(t_{0},y_{0})$, the utility functional, 
\begin{equation}
\label{Eq_4.2}
J(t_{0},y_{0};u):=\mathbb{E}^{t_{0},y_{0}}\left(\int\limits_{t_{0}}^{T} L(s, Y^{u}(s),u(s))dt+\Psi(T,Y^{u}(T))\right),
\end{equation}
is to be maximized. Here we want to maximize the functional up to time $T$, a fixed time horizon. Therefore, we want to solve the problem:
\[\sup_{u \in \mathcal{A}(t_{0},y_{0})}J(t_{0},y_{0};u).\]
Accordingly, the value function is defined as,
\[V(t,y):= \sup_{u \in \mathcal{A}(t,y)} J(t,y;u),~(t,y) \in Q.\]
Here $Q=[t_{0},t_{1}) \times O$ where $O \in \mathbb{R}^{n}$ is an open set.
For each function $G \in C^{1,2}(Q)$ and $\left(t,y\right) \in Q $, ($v \in U$ where $U$ is a subset of $\mathbb{R}^{d}$), we consider the following differential operator:
\begin{equation}
\label{Eq_4.3}
A^{v}G(t,y):=G_{t}(t,y)+0.5 \sum\limits_{i,j=1}^{n} \Sigma_{ij}^{*}(t,y,v) \cdot G_{y_{i} y_{j}}(t,y)+\sum\limits_{i=1}^{n} \Lambda_{i}(t,y,v) \cdot G_{y_i}(t,y),
\end{equation}
where $\displaystyle{\Sigma^{*}:=\Sigma\Sigma^{\top}}$. From the verification theorem \cite{Pham2009}, if the problem, 
\begin{equation}
\label{Eq_4.4}
\sup_{v\in U} \left[A^{v}G(t,y)+L(t,y,v)\right]=0,~(t,y)\in Q~\text{and}~
G(t,y)=\Psi(t, y),~(t, y) \in \partial Q.
\end{equation}
has a solution, then the following holds:
\begin{enumerate}[(A)]
\item $G(t,y) \geq J(t,y;u)$ for all $(t,y) \in Q$ and $u(\cdot) \in \mathcal{A}(t,y)$.
\item If for $(t,y) \in Q$ there exists a control $u^{*}(\cdot) \in \mathcal{A}(t,y)$ with,
\begin{equation}
\label{Eq_4.6}
u^{*}(s) \in \arg\max_{v \in U} \left[A^{v} G(s,Y^{*}(s))+L(s,Y^{*}(s),v) \right],
\end{equation}
for all $s \in [t,\tau]$ (where $Y^{*}$ is the solution of the controlled SDE corresponding to $u^{*}(\cdot)$), then we have,
\begin{equation}
\label{Eq_4.7}
G(t,y)=V(t,y)=J(t,y;u^{*}),
\end{equation}
\textit{i.e.,} $u^{*}(\cdot)$ is an optimal control and $G$ coincides with the value function.
\end{enumerate}

\subsection{Problem Description}

We construct the problem with a Money Market Account (MMA) and a risky bond. Let $\pi(t)$ denote the amount invested in the risky bond. Our goal is to determine $\pi(t)$ so that the terminal wealth is maximized. As mentioned earlier, the interest rate $r(t)$ follows the Vasicek model (discussed elaborately in \cite{Puhle2008}) given by,
\[dr(t)=a(t)dt+bdW(t),\]
where $a(t)$ is given by, 
\[a(t)=\theta(t)-\alpha r(t),~\alpha > 0.\]
Here $\alpha$ and $\displaystyle{\frac{\theta(t)}{\alpha}}$ are the mean reversal speed and level, respectively. Further, $b$ is the instantaneous standard deviation. Accordingly, we begin with a portfolio where the bank can invest in a MMA and a bond (zero coupon) with maturity $T_{1}>T$, with $T$ being the time horizon in which we are seeking to optimize our wealth level. The bond price using risk premium is presented in \cite{Korn2002,Munk2004}. The dynamics of price of the risky bond is being presented by, 
\[dP(t,T_{1})=P(t,T_{1})\left[\mu(t)dt+\sigma(t)dW(t)\right].\]
We take $\mu(t)=r(t)+\sigma(t)\zeta (t)$ ($\zeta(t)$ being the risk premium), as motivated by \cite{Korn2002}. Also, we take \\ $\displaystyle{\sigma(t)=\frac{b}{\alpha}\left(1-\exp{(-\alpha(T_{1}-t))} \right)}$. Now the total wealth level follows the SDE, 
\begin{eqnarray}
dX(t) &=& X(t)\left[(\pi(t)\mu(t)+(1-\pi(t))r(t))dt+\pi(t)\sigma(t) dW(t)\right], \nonumber\\
&=& X(t)\left[(\pi(t)\zeta(t)\sigma(t)+r(t)) dt+\pi(t)\sigma(t) dW(t)\right].
\label{Eq_4.8}
\end{eqnarray}

Now our objective is to maximize the utility of the terminal wealth $U(X(T))$. Consequently, the optimization problem becomes, 
\[\max_{\pi(\cdot) \in \mathcal{A}(0,x_{0})} \mathbb{E}\left[U(X(T))\right],\]
with the initial condition $X(0)=x_{0}=1$ (normalizing), and $\mathcal{A}(0,x_{0})$ is the set of admissible controls corresponding to the initial condition $t=t_{0}$ and $x=x_{0}$.
Now, comparing with equation (\ref{Eq_4.1}), we get,
\begin{eqnarray}
\label{Eq_4.9}
Y(t) &=& (X(t),r(t))^{\top}, \nonumber \\
\Lambda(t,x,r,\pi) &=& (x(\pi \zeta \sigma+r),a)^{\top}, \nonumber \\
\Sigma(t,x,r,\pi) &=& (x \pi \sigma, b)^{\top}, \nonumber \\
\Sigma^{*}(t,x,r,\pi)&=& 
\begin{pmatrix}
x^{2}\pi^{2} \sigma^{2} & bx \pi \sigma \\
bx \pi \sigma & b^{2}
\end{pmatrix}, 
\nonumber\\
A^{\pi} G(t,x,r)&=& G_{t}+0.5\left(x^{2} \pi^{2} \sigma^{2} G_{xx}
+2x\pi b \sigma G_{x r}+b^{2} G_{rr}\right), \nonumber \\
&+& x(\pi\zeta \sigma+r)G_{x}+aG_{r}.
\end{eqnarray}
Hence, in order to get the desired optimal control we need to solve the following HJB equation:
\begin{equation}
\label{Eq_4.10}
\sup_{|\pi| \leq \delta} A^{\pi} G(t,x,r)=0;~G(T,x,r)=U(X(T)), 
\end{equation}
with $\delta$ being the upper-bound of the investment in the risky asset. The allocation is made by bounding the Expected Loss ($EL$) by $\theta$, for the entire portfolio. For the MMA, the probability of default is zero, and it has no contribution in $EL$. So the $EL$ of the portfolio is,
\[EL=\pi(t) p \lambda \leq \theta \implies \pi(t) \leq \frac{\theta}{p\lambda}=\delta~(\text{bound on risky asset}).\]
In the above equation, $p$ and $\lambda$ denote the probability of default and loss given default, respectively. In case we take the utility function of the form $U(x)=x^{\gamma},~\gamma \in \mathbf{R}$ then we can find a closed feedback form for $\pi(t)$ as a function of investment-instrument's characteristic. As already noted, the goal is to study the effect of inclusion of Limited Liability in the model. To make a comparison, we need a closed form solution (for the investment portfolio), and therefore, we approximate the pay-off function with some function of the form $x^{\gamma}$, where $\gamma \in \mathbf{R}$ in $L^{2}$-norm. For the case without Limited Liability, the pay-off function, $X$ is directly approximated by $X^{1}$. Therefore, the utility function becomes,
\[U_{1}(X)=X,\]
which represents the objective function without Limited Liability. In this case, we know that the solution exists when the amount invested in the risky asset has a cap, the amount invested in the risky asset plateaus the upper-bound given by the constraint. In this context, Limited Liability means that the pay-off function cannot fall below a certain level, because if that happens, then the bank faces bankruptcy. In this study, we denote this bankruptcy level by $F$, that is, the payoff function in this case becomes $\max\left(F,X\right)$. In this case, the boundary function is a Lipschitz continuous function. We can approximate it with a $C^{2}$ function  in the above-mentioned functions. Later we show that the $\gamma$ for the best approximated function lies in $(0,1)$. In this case, the optimization model is the same as the model mentioned in \cite{Korn2002}. Then we find the optimal amount invested in the risky asset. Note that the expected terminal wealth is continuously dependent on the wealth level, and in turn, the wealth level is also continuously dependent on the amount invested in the risky asset. Hence, expected terminal wealth is continuously dependent on the amount invested in the risky asset. Therefore we can compare the objective functions with their approximated versions. Now, for this approximation, we need a cap on $X$, which is denoted by $B$, that is, the maximum wealth of the portfolio in one year ($T=1$) can be $B$. Finally, we get, 
\[U_{2}(X)=X^{\gamma^{*}}\] 
which is the best approximation of $\max\left(F,X\right)$ in the above mentioned set. Then, in order to discuss the final effect we use the $DD$ concept from KMV Model. The DD defined as:
\[DD(\mu(t),D,\sigma(t),t):=\frac{\ln\left(\frac{V_{A}}{D}\right)+\left(\mu-\frac{1}{2} \sigma(t)^{2}\right) t}{\sigma(t)\sqrt{t}}.\]
Here, $V_{A}$ denotes the initial wealth and $D$ is the face value of the debt. When this $DD$ decreases, the chance of default increases and hence the stability decreases. The interested reader may refer to \cite{Crosbie2019} for a detailed description of this phenomenon.

\section{Analysis}
\label{Sec_Four_Analysis}

In this section we present the theoretical results of this study. To begin with, we study the relationship between the DD and the risk of the bank portfolio.

\begin{theorem}
$DD$ reduces when the holding in riskier security increases.
\end{theorem}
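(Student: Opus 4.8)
The plan is to write $DD$ explicitly as a function of the risky holding $\pi$ and then show that its derivative in $\pi$ is negative. First I would read off, from the wealth dynamics \eqref{Eq_4.8}, that the portfolio value $X(t)$ has instantaneous drift $\mu_X=\pi\zeta\sigma+r$ and instantaneous volatility $\sigma_X=\pi\sigma$. These are exactly the $\mu$ and $\sigma$ that feed the KMV distance-to-default, because in the structural model the asset process whose proximity to the default point is measured is the bank's own wealth. Substituting them into the definition of $DD$ gives
\[
DD(\pi)=\frac{\ln\!\left(\frac{V_A}{D}\right)+\left(\pi\zeta\sigma+r-\tfrac12\pi^2\sigma^2\right)t}{\pi\sigma\sqrt t}.
\]

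Next I would isolate the $\pi$-dependence by dividing through term by term, obtaining
\[
DD(\pi)=\frac{\ln\!\left(\frac{V_A}{D}\right)+rt}{\pi\sigma\sqrt t}+\zeta\sqrt t-\tfrac12\pi\sigma\sqrt t,
\]
so that the middle term is constant in $\pi$ while the first and third terms carry the entire dependence. Differentiating is then routine and yields
\[
\frac{d}{d\pi}DD(\pi)=-\frac{\ln\!\left(\frac{V_A}{D}\right)+rt}{\pi^2\sigma\sqrt t}-\tfrac12\sigma\sqrt t.
\]

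The decisive step is the sign analysis. With $\sigma>0$, $t>0$ and $\pi>0$, the term $-\tfrac12\sigma\sqrt t$ is strictly negative, and the first term is nonpositive precisely when $\ln(V_A/D)+rt\ge 0$. I would invoke the going-concern solvency condition $V_A\ge D$ together with a nonnegative short rate $r\ge 0$; under these, $\ln(V_A/D)+rt\ge 0$, both summands are nonpositive with the second strictly so, and hence $\frac{d}{d\pi}DD(\pi)<0$. Therefore $DD$ is strictly decreasing in the risky holding $\pi$, which is the assertion.

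I expect the main obstacle to be exactly this sign question, i.e. pinning down the dependence on $V_A$, $D$ and $r$: the monotonicity is unconditional only once one commits to $\ln(V_A/D)+rt\ge 0$, so I would state the solvency assumption explicitly rather than leave it implicit, noting that even a mildly leveraged bank with $V_A<D$ remains covered whenever the interest accrual $rt$ compensates, so the conclusion is robust in the economically relevant regime. A secondary point I would handle carefully is the identification of which drift and volatility enter the $DD$ formula, making explicit that they are the portfolio coefficients $\pi\zeta\sigma+r$ and $\pi\sigma$ rather than the bond's own $\mu$ and $\sigma$, so that ``increasing the holding in the riskier security'' maps directly to increasing $\pi$.
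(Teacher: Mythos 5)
Your proof is correct, but it takes a different route from the paper's. The paper keeps the $DD$ formula parametrized by the security's own coefficients (with $\mu=r+\sigma\zeta$ treated as a function of $\sigma$) and differentiates $DD$ directly with respect to $\sigma$, identifying ``more holding in riskier security'' with ``larger portfolio volatility''; you instead substitute the portfolio coefficients $\mu_X=\pi\zeta\sigma+r$ and $\sigma_X=\pi\sigma$ from the wealth dynamics and differentiate with respect to the weight $\pi$ itself. Your parametrization maps more literally onto the theorem's statement, and --- importantly --- it forces the sign condition into the open: your derivative is negative only when $\ln(V_A/D)+rt\ge 0$, which you handle with an explicit solvency assumption. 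The paper's computation faces exactly the same issue but hides it: a correct quotient-rule differentiation of the paper's $DD$ gives
\[
\frac{\partial}{\partial\sigma}(DD)=\frac{-\ln\!\left(\tfrac{V_A}{D}\right)\sqrt{t}-rt\sqrt{t}-\tfrac{\sigma^{2}}{2}t\sqrt{t}}{(\sigma\sqrt{t})^{2}},
\]
whereas the paper's displayed derivative silently omits the $-\ln(V_A/D)\sqrt{t}$ term, so its unconditional claim of negativity is only valid under an implicit assumption such as $V_A=D$ (or more generally $\ln(V_A/D)+rt+\tfrac{\sigma^2}{2}t\ge 0$, a slightly weaker condition than yours). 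So your version is the more careful one; the paper's is shorter and works at the level of volatility rather than portfolio weight, but its stated proof has a gap that your explicit solvency hypothesis repairs.
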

\begin{proof}
Differentiating $DD$ with respect to $\sigma$, we get,
\[\frac{\partial}{\partial \sigma}(DD)= \frac{-rt\sqrt{t}-\frac{\sigma{2}}{2}t\sqrt{t}}{(\sigma \sqrt{t})^{2}}.\]
Here $\mu(t)=r(t)+\sigma(t)\zeta(t)$ and $\zeta(t)$ represents the risk premium. From the above equation, we can see that $\displaystyle{\frac{\partial}{\partial \sigma} (DD)<0}$, and hence it is clear that $DD$ decreases as the $\sigma(t)$ increases.
\end{proof}

An interesting question pertains to how does the final portfolio vary with the $\gamma$ in the utility function $U(x)=x^{\gamma}$. In the following theorem, we have studied the relation between this $\gamma$ and the final portfolio.

\begin{theorem}
When $\gamma$ in $x^{\gamma}$ increases, then the holding in the risky asset increases. 
\end{theorem}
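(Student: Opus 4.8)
The plan is to derive a closed-form expression for the optimal risky holding $\pi^{*}$ as an explicit function of $\gamma$ by solving the HJB equation \eqref{Eq_4.10} with the power-utility terminal condition $G(T,x,r)=x^{\gamma}$, and then to differentiate that expression in $\gamma$ and inspect its sign. Since the only source of randomness in the coefficients is the Vasicek short rate, the natural separable ansatz is $G(t,x,r)=x^{\gamma}f(t,r)$ with $f(t,r)=\exp\bigl(h(t)+k(t)r\bigr)$, which is the standard device for power utility under affine rate dynamics, as in \cite{Korn2002}.

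First I would extract the first-order condition. The $\pi$-dependent part of $A^{\pi}G$ from \eqref{Eq_4.9} is the quadratic $0.5\,x^{2}\sigma^{2}G_{xx}\pi^{2}+\bigl(xb\sigma G_{xr}+x\zeta\sigma G_{x}\bigr)\pi$; because $0<\gamma<1$ forces $G_{xx}<0$, this opens downward and the unconstrained maximiser is interior, $\pi^{*}=-\,(bG_{xr}+\zeta G_{x})/(x\sigma G_{xx})$. Substituting $G_{x}=\gamma x^{\gamma-1}f$, $G_{xx}=\gamma(\gamma-1)x^{\gamma-2}f$, and $G_{xr}=\gamma x^{\gamma-1}k f$ collapses every $x$, $\gamma$, and $f$ factor and leaves
\[\pi^{*}(t)=\frac{\zeta(t)+b\,k(t)}{(1-\gamma)\,\sigma(t)}.\]

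Next I would pin down $k(t)$. Reinserting the ansatz and $\pi^{*}$ into \eqref{Eq_4.10}, dividing through by $x^{\gamma}f$, and collecting the terms linear in $r$ yields the linear ODE $k'(t)-\alpha k(t)+\gamma=0$ with terminal condition $k(T)=0$ (the $r$-free terms fix $h(t)$ and are irrelevant here). This integrates to $k(t)=\tfrac{\gamma}{\alpha}\bigl(1-e^{-\alpha(T-t)}\bigr)$, so the hedging part of $\pi^{*}$ is itself proportional to $\gamma$. Writing the two $\gamma$-free coefficients $P(t)=\zeta(t)/\sigma(t)$ and $Q(t)=\tfrac{b}{\alpha}\bigl(1-e^{-\alpha(T-t)}\bigr)/\sigma(t)$, the holding becomes $\pi^{*}=\tfrac{1}{1-\gamma}P+\tfrac{\gamma}{1-\gamma}Q=\tfrac{P+Q}{1-\gamma}-Q$, so that $\partial\pi^{*}/\partial\gamma=(P+Q)/(1-\gamma)^{2}$.

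The conclusion is then a sign check: with $\sigma(t)>0$, $b>0$, $\alpha>0$, $1-e^{-\alpha(T-t)}>0$ for $t<T$, and a nonnegative risk premium $\zeta(t)\ge 0$, both $P$ and $Q$ are nonnegative and $P+Q>0$, so $\partial\pi^{*}/\partial\gamma>0$ and $\pi^{*}$ rises with $\gamma$. I expect the main obstacle to be the entanglement of $\gamma$ inside the hedging demand: because $k(t)$ is itself a multiple of $\gamma$, one cannot read monotonicity off the myopic prefactor $1/(1-\gamma)$ alone, and the regrouping $\pi^{*}=(P+Q)/(1-\gamma)-Q$ is precisely what isolates a single $\gamma$-bearing block of unambiguous sign. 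A secondary caveat is that this interior analysis applies only while $\pi^{*}\le\delta$; once the constraint binds the holding is clamped at $\delta$ and the statement holds trivially, so the two regimes must be matched at the threshold.
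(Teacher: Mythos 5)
Your proof is correct and takes essentially the same route as the paper: obtain the closed-form interior optimum $\pi^{*}(t)=\frac{\zeta(t)+\gamma k(t)}{(1-\gamma)\sigma(t)}$, differentiate with respect to $\gamma$, observe that the derivative equals $\left(P+Q\right)/(1-\gamma)^{2}>0$, and dismiss the capped regime $\pi^{*}=\delta$ as trivial. The only differences are cosmetic: the paper imports the closed form from \cite{Korn2002} and applies the quotient rule, whereas you re-derive it from the HJB equation via the ansatz $G(t,x,r)=x^{\gamma}e^{h(t)+k(t)r}$ (making the argument more self-contained) and isolate the $\gamma$-dependence through the regrouping $\pi^{*}=\frac{P+Q}{1-\gamma}-Q$, which is the same computation in a different arrangement.
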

\begin{proof}
Proceeding on the lines of \cite{Korn2002}, we get the optimal portfolio as,
\[\pi^{*}(t)=\max\left(\frac{1}{1-\gamma} \cdot \frac{\zeta(t)+b \beta(t)}{\sigma(t)},\delta \right),\]
where,
$\displaystyle{\beta(t)=\frac{\gamma}{\alpha}\left(1-\exp (\alpha(t-T))\right)}$. 
If the optimal holding plateaus at the maximum amount, then there is nothing to prove. Accordingly, let us consider the case when,
\[\pi^{*}(t)=\frac{1}{1-\gamma} \cdot \frac{\zeta(t)+b \beta(t)}{\sigma(t)}.\]
Now substituting this $\beta(t)$ in the equation, we get,
\[\pi^*(t)= \frac{1}{1-\gamma} \cdot \frac{\zeta(t)+b \frac{\gamma}{\alpha}(1-\exp (\alpha(t-T)))}{\sigma(t)}\]
Let $\displaystyle{k(t):= \frac{b}{\alpha}(1-\exp (\alpha(t-T)))}$. Then substituting this $k(t)$ in the preceding equation, we get,
\[\pi^{*}(t)=\frac{1}{1-\gamma} \cdot \frac{\zeta(t)+\gamma k(t)}{\sigma(t)}.\]
Finally, differentiating this w.r.t $\gamma$, we get,
\begin{eqnarray*}
\frac{\partial \pi^{*}(t)}{\partial t} &=& \frac{(1-\gamma)\sigma(t)k(t)+ \sigma(t)(\zeta(t)+\gamma k(t))}{(1-\gamma)^{2} \sigma^{2}(t)}, \\
&=& \frac{\sigma(t)k(t)+\sigma(t)\zeta (t)}{(1-\gamma)^{2} \sigma^{2}(t)} > 0.
\end{eqnarray*}
Therefore, the optimal holding in the risky asset increases with increase in the $\gamma$ of the utility function.
\end{proof}
We now prove a lemma which will be used to prove the next theorem.

\begin{lemma}
$\displaystyle{g(x)=\left(\frac{x^{2}}{2}-x\right)\log x-\frac{x^{2}-4x}{4}},~x>0$ is an increasing function.
\end{lemma}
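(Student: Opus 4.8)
The plan is to show that $g'(x)\ge 0$ for every $x>0$, which immediately gives that $g$ is increasing. The whole argument rests on computing the derivative and noticing a convenient cancellation, after which the sign analysis is elementary.

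First I would differentiate term by term. Writing $g(x)=\left(\tfrac{x^{2}}{2}-x\right)\log x-\tfrac{x^{2}-4x}{4}$, the product rule applied to the first term gives $(x-1)\log x+\left(\tfrac{x^{2}}{2}-x\right)\tfrac{1}{x}=(x-1)\log x+\tfrac{x}{2}-1$, while the second term contributes $-\tfrac{x}{2}+1$. Adding these, the polynomial pieces $\tfrac{x}{2}-1$ and $-\tfrac{x}{2}+1$ cancel exactly, leaving the clean expression $g'(x)=(x-1)\log x$. Recognizing this cancellation is the one place where care is needed; everything else is routine.

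Next I would analyze the sign of $g'(x)=(x-1)\log x$ by a simple case split according to the position of $x$ relative to $1$. For $x>1$ both factors are positive, so $g'(x)>0$; for $0<x<1$ both factors are negative, so again $g'(x)>0$; and at $x=1$ both factors vanish, so $g'(1)=0$. Hence $g'(x)\ge 0$ on all of $(0,\infty)$, with equality only at the isolated point $x=1$, which is enough to conclude that $g$ is (strictly) increasing.

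I do not anticipate any genuine obstacle here. The only thing to watch is the bookkeeping in the product rule, since missing the cancellation of the $\pm\left(\tfrac{x}{2}-1\right)$ terms would leave a messier expression whose sign is harder to read off directly. Once $g'(x)=(x-1)\log x$ is in hand, the monotonicity of $\log$ (equivalently, the standard inequality that $(x-1)$ and $\log x$ always share the same sign) finishes the proof.
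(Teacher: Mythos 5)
Your proof is correct and follows exactly the same route as the paper: compute $g'(x)=(x-1)\log x$ and observe that the two factors always share the same sign, so $g'\ge 0$ on $(0,\infty)$. In fact, your version is slightly more careful than the paper's, which asserts $g'(x)>0$ outright and overlooks that $g'(1)=0$; your case split handles this point correctly.
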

\begin{proof}
The proof is evident, since,
\[\frac{dg}{dx}=\left(x-1\right) \log x >0.\]
This shows that the function is increasing in the given range. 
\end{proof}

We have considered the approximation of the pay-off function with $x^{\gamma}$, where $\gamma \in \mathbf{R}$. In the following theorem, we show that the best-approximated function (for the objective function, with Limited Liability) has $\gamma \in (0,1)$.

\begin{theorem}
The optimal approximation of the payoff function, with Limited Liability, is given by $x^{\gamma^{*}}$ where $\gamma^{*} \in (0,1) $. 
\end{theorem}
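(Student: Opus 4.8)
The plan is to treat the claim as a statement about the minimizer of the $L^{2}$ approximation error. Writing the payoff as $\max(F,x)$ and the approximant as $x^{\gamma}$ over the admissible wealth range $(0,B]$ with $0<F<1<B$, I would define the error
\[
E(\gamma):=\int_{0}^{B}\left(\max(F,x)-x^{\gamma}\right)^{2}\,dx ,
\]
and locate $\gamma^{*}=\arg\min_{\gamma}E(\gamma)$. Differentiating under the integral sign gives
\[
E'(\gamma)=-2\int_{0}^{B}\left(\max(F,x)-x^{\gamma}\right)x^{\gamma}\log x\,dx ,
\]
so the whole argument reduces to showing that $E'$ is negative at $\gamma=0$ and positive at $\gamma=1$; continuity of $E'$ together with the intermediate value theorem will then force a critical point, the minimizer, to lie strictly inside $(0,1)$.

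For the endpoint $\gamma=1$ I would note that $\max(F,x)-x$ vanishes on $[F,B]$, leaving $E'(1)=-2\int_{0}^{F}(F-x)\,x\log x\,dx$. On $(0,F)$ one has $F-x>0$, $x>0$ and $\log x<0$ (since $F<1$), so the integrand is negative and hence $E'(1)>0$. This already shows the optimal exponent cannot reach $1$, and it confirms the ``flattening'' intuition that Limited Liability pulls $\gamma$ below the no-Liability value $\gamma=1$.

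The harder endpoint is $\gamma=0$, and this is exactly where the preceding Lemma is essential. Here $E'(0)=-2\int_{0}^{B}\left(\max(F,x)-1\right)\log x\,dx$, which I would split at $x=F$. The piece on $[0,F]$ equals $(F-1)\int_{0}^{F}\log x\,dx=(F-1)(F\log F-F)$, a product of two negative factors and hence positive. For the piece on $[F,B]$ the integrand is $(x-1)\log x$, and the function $g$ of the Lemma is precisely its antiderivative, since $g'(x)=(x-1)\log x$; therefore $\int_{F}^{B}(x-1)\log x\,dx=g(B)-g(F)$, which is positive because $g$ is increasing and $B>F$. Both pieces being positive yields $E'(0)<0$.

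The main obstacle is exactly this sign of $\int_{F}^{B}(x-1)\log x\,dx$: the integrand changes sign at $x=1$, so positivity is invisible termwise and must be extracted by recognizing the closed antiderivative $g$ and invoking its monotonicity, which is the sole purpose of the Lemma. With $E'(0)<0<E'(1)$ established, the intermediate value theorem produces $\gamma^{*}\in(0,1)$. To upgrade this critical point to the genuine minimizer I would finally check that $E(\gamma)\to\infty$ at the endpoints of the range where it is finite (so a global minimum exists) and that $E'$ crosses zero from below, which identifies the interior critical point as the optimal approximating exponent.
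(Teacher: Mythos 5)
Your proposal is correct and follows essentially the same route as the paper: the same $L^{2}$ error functional, differentiation under the integral, the sign of the derivative at the endpoints $\gamma=0$ and $\gamma=1$, and the Lemma's function $g$ used exactly as the antiderivative of $(x-1)\log x$ to handle the sign-changing integrand on $[F,B]$. Your closing remarks on existence of the minimizer via the intermediate value theorem are a slightly more careful framing than the paper's, but the substance is identical.
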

\begin{proof}
The error term in $L^{2}$ norm is given by:
\begin{eqnarray*}
\text{Err} &=& \int\limits_{0}^{B}(\max(F,x)-x^\gamma)^{2}dx, \\
&=& \int\limits_{0}^{F}(F-x^{\gamma})^{2}dx+\int\limits_{F}^{B}(x-x^{\gamma})^{2}dx.
\end{eqnarray*}
We know that $F<1$ (Bankruptcy level) and $B>1$ (Upper bound for net worth in the given timeline) are constants. Now differentiating the error term w.r.t $\gamma$, we get,
\[\frac{d(\text{Err})}{d\gamma}=-2\int\limits_{0}^{F} (F-x^{\gamma})x^{\gamma}\log x dx-2\int\limits_{F}^{B} (x-x^{\gamma})x^{\gamma}\log x dx.\]

\begin{enumerate}[(A)]
\item For the case $\gamma=1$, we get,
\begin{eqnarray*}
\frac{d(\text{Err})}{d\gamma} &=&-2\int\limits_{0}^{F}(F-x)x\log x dx-2 \int\limits_{F}^{B} (x-x)x\log x dx, \\
&=& -2\int\limits_{0}^{F}(F-x)x\log x dx.
\end{eqnarray*}
Now, in order to evaluate the integral, we apply the formula,
$\displaystyle{\int u dv=uv-\int vdu}$ with $u=\log x$ and $\displaystyle{v= \frac{Fx^{2}}{2}-\frac{x^{3}}{3}}$, 
which renders the integral as,
\begin{eqnarray*}
&&\int\limits_{0}^{F} (F-x)x\log x dx, \\
&=&\left[\left(\frac{Fx^{2}}{2}-\frac{x^{3}}{3}\right) \log x\right]_{0}^{F} - \int\limits_{0}^{F}\left(\frac{Fx^{2}}{2}-\frac{x^{3}}{3}\right)dx, \\
&=& \frac{F^{3}}{6} \log F-\frac{5}{36}F^{3} < 0.
\end{eqnarray*}
Since $F<1$, hence $\log x$ is negative, and therefore, 
\[\frac{d(\text{Err})}{d\gamma}=-2\int\limits_{0}^{F} (F-x)x\log x dx > 0.\]
So the optimal $\gamma$ satisfies the condition $\gamma^{*}<1$.
\item For the case of $\gamma=0$, we get, 
\begin{eqnarray*}
\frac{d(\text{Err})}{d\gamma} &=& -2\int\limits_{0}^{F} (F-1) \log x dx-2 \int\limits_{F}^{B} (x-1) \log x dx, \\
&=& -2\left[\left[(F-1)x(\log x-1)\right]_{0}^{F}+ \left[\left(\frac{x^{2}}{2}-x\right)\log x-\frac{x^{2}-4x}{4}\right]_{F}^{B} \right], \\
&=&-2\left[(F-1)F(\log F-1)\right], \\
&-& 2\left[\left[\left(\frac{B^{2}}{2}-B\right)\log B- \frac{B^{2}-4B}{4}\right]-\left[\left(\frac{F^{2}}{2}-F\right)\log F- \frac{F^{2}-4F}{4}\right]\right].
\end{eqnarray*}

Note that, $\displaystyle{g(x)=\left(\frac{x^{2}}{2}-x\right)\log x- \frac{x^{2}-4x}{4}}$, is an increasing function. Hence $g(B)>g(F)$. So the second term in the last equation is positive and the first term $\left[(F-1)F(\log F-1)\right]>0$. Hence, we get,
\[\frac{d(\text{Err})}{d\gamma} < 0.\]
So, if $\gamma$ increases, the error decreases at $\gamma=0$. Hence the optimal $\gamma$ satisfies the condition $\gamma^{*} > 0$.
\end{enumerate}
Therefore, combining both these cases, we conclude that $0 < \gamma^{*} < 1$.
\end{proof}

All these theories presented above lead to the conclusion that the incorporation of Limited Liability leads to inclusion of less risky assets in the portfolio. We state this conclusion, in the form of the following corollary.

\begin{corollary}
The maximum holding in the risky asset continuously changes the value of the objective function. The approximated function for the Limited Liability has lower $\gamma$, as compared to the objective function without Limited Liability. Hence, the above theorem implies that the model with Limited Liability incorporates less risky assets than the objective function without Limited Liability.
\end{corollary}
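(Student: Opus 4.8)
The plan is to assemble this corollary from Theorems~1, 2 and~4 together with the continuous-dependence observation recorded just before the definition of $U_{2}$ in Section~\ref{Sec_Four_Formulation}. The statement really bundles three assertions---a continuity claim, an ordering of the approximating exponents, and the resulting ordering of risky holdings---and I would treat them in that order. For the continuity claim I would make precise the chain of dependences already noted in the text: the map $\pi \mapsto X(T)$ is continuous because $X(T)$ solves the linear SDE~\eqref{Eq_4.8}, whose drift and diffusion depend smoothly on $\pi$, so standard $L^{2}$ stability estimates for SDEs yield continuity of the terminal wealth in the control; and the map $X(T) \mapsto \mathbb{E}[U(X(T))]$ is continuous since $U$ is continuous and, under the cap $X \le B$ together with $|\pi|\le\delta$, the relevant moments are uniformly bounded, so dominated convergence applies. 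Composing the two maps shows that the objective value depends continuously on the maximum risky holding $\delta$.

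Next I would pin down the two governing exponents. Without Limited Liability the payoff is approximated directly by $U_{1}(X)=X=X^{1}$, so the exponent is $\gamma=1$; with Limited Liability, Theorem~4 gives the best $L^{2}$ approximant as $X^{\gamma^{*}}$ with $\gamma^{*}\in(0,1)$. Hence $\gamma^{*}<1$, which is exactly the ``lower $\gamma$'' assertion of the corollary.

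The conclusion then follows by invoking Theorem~2, which shows that the optimal risky holding $\pi^{*}(t)$ is non-decreasing in $\gamma$ and strictly increasing on the branch where the cap $\delta$ is inactive. The strict inequality $\gamma^{*}<1$ therefore forces the Limited-Liability holding to be no larger than---and, on that interior branch, strictly smaller than---the holding of the model without Limited Liability. Finally, Theorem~1 translates this reduced exposure $\sigma(t)$ into a larger value of $DD$, i.e.\ greater resilience, which closes the chain of implications.

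The step I expect to be the main obstacle is the interaction with the cap $\delta$ in the expression for $\pi^{*}(t)$: on the saturated branch both models may coincide at $\delta$, so the strict ``less risky'' conclusion holds genuinely only on the interior branch, and in general the comparison degenerates to a weak inequality. I would therefore phrase the conclusion as a (weak) comparison and flag explicitly the regime in which it is strict, so that the continuity claim of the first step and the monotonicity of Theorem~2 are combined without over-interpretation.
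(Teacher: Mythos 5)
Your proposal is correct and follows essentially the same route the paper intends: the corollary is stated without a separate proof, and is meant to follow exactly as you argue, by combining the continuity remark from Section 2 (objective value depends continuously on the risky holding), the fact that $U_{1}(X)=X^{1}$ versus $\gamma^{*}\in(0,1)$ from the approximation theorem, and the monotonicity of $\pi^{*}(t)$ in $\gamma$ from Theorem 2. Your explicit caveat that the comparison is only weak on the branch where both holdings saturate at the cap $\delta$ (and strict only on the interior branch) is a precision the paper glosses over but is consistent with its own numerical example, where the Limited-Liability portfolio eventually plateaus at the same maximum.
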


\section{An Illustrative Example}
\label{Sec_Four_Example}

We solve the optimization problem stated in (\ref{Eq_4.10}) with $U_{1}(X)$ and $U_{2}(X)$, being the utility functions, for the case with Limited Liability and without Limited Liability, respectively. The risky loan has the probability of default $p=0.1$ and the loss given default $\lambda=0.6$. Therefore, we get the upper bound on the investment made on the risky bond as $\delta=0.83$, with the total bound of expected loss being $EL=0.05$.
For this example, we have taken $F=0.75$, the threshold level, below which the bank faces bankruptcy. We choose the bound of the wealth as $B=1.2$, that is, using the bonds, the bank can produce a maximum gain of $20\%$. This is quite natural because of the lower return of the bond. Using these parameter values, the best approximation for the pay-off function with limited liability is given by, 
\[U_{2}(X)=X^{0.1821}.\]
Therefore, we have the two utility functions. Accordingly, we solve the optimization problem and compare the solutions. To solve the optimization problem, the other parameter values are taken as, $\alpha = 0.15$, $b=0.67$, $\theta(t)=0.0075$, $\zeta(t)=0.3$, $T_{1}=1.5$, $T=1$ and the IRB based capital requirement $k=0.2$. Therefore, the capital structure of the firm is $\text{Capital}(t) \geq \max\left(0.04,k\pi(t)\right)$, with $0.04$ being the minimum capital requirement mandated by the supervisor. 

From Figure \ref{FIG:four_portfolio}, we see that when we solve the optimization problem, without Limited Liability, then the portfolio holds as much as the risky loan as possible. On the other hand, with Limited Liability, the portfolio holds less risky loans up to a finite time point, after which it plateaus at the maximum amount of risky loan. This happens because, as time progresses and approaches maturity, the volatility decreases, thereby behaving like a risk free asset. Hence it can be incorporated (near maturity) in the portfolio, without increasing the risk. Thus this result is consistent from the financial point of view.

From Figure \ref{FIG:four_DD} we observe that the $DD$ decreases, as the total risk of the portfolio increases. As the $DD$ decreases, the chances of default increase and the firm becomes less stable. On the other hand, as the time approaches maturity, the volatility is reduced. Consequently, there is an increment in the $DD$, which causes better resiliency of the bank.

\begin{center}
\begin{figure}[!ht]
\includegraphics[width=0.7\linewidth]{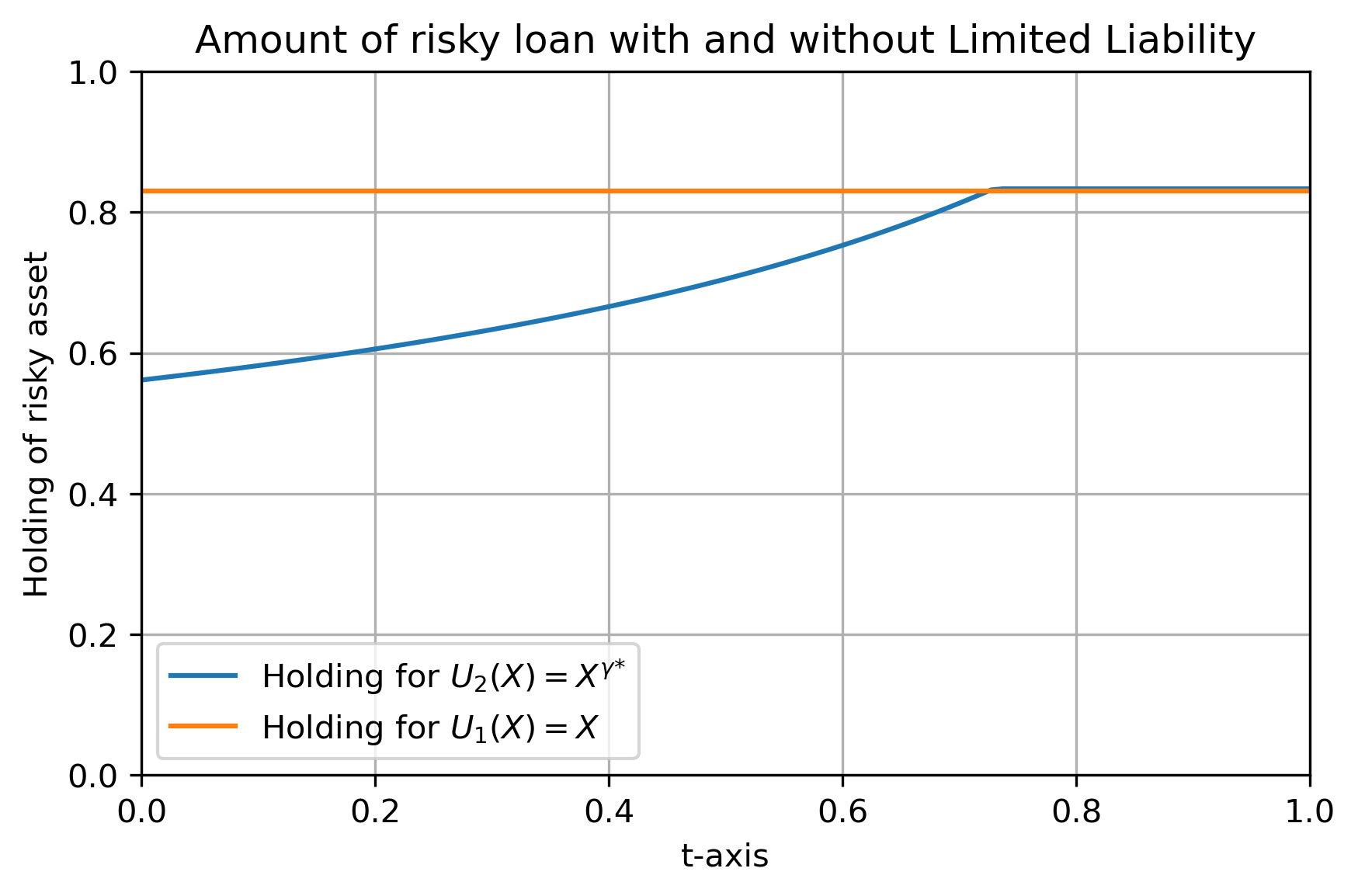}
\caption{Weight of risky loan with and with-out Limited Liability}
\label{FIG:four_portfolio}
\end{figure}
\end{center}
  
\begin{center}
\begin{figure}[!ht]
\includegraphics[width=0.8\linewidth]{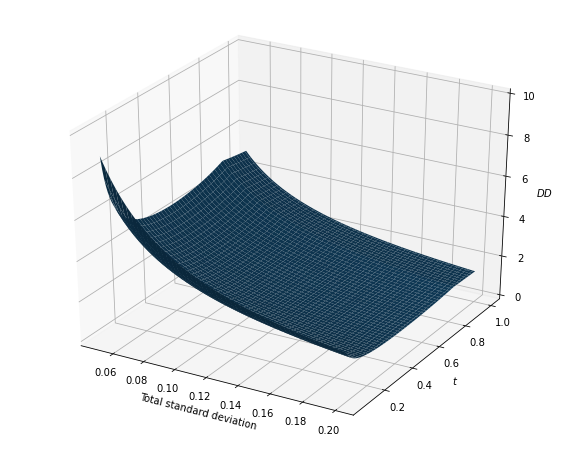}
\caption{Distance of default}
\label{FIG:four_DD}
\end{figure}
\end{center}

\section{Conclusion}
\label{Sec_Four_Conclusion}

Bank's decision problems are usually constructed using a traditional return-risk framework. For a discrete time setup, literature shows that including Limited Liability in the decision model can reduce the risk in the final portfolio. Having said so, the effect of the inclusion of Limited Liability also needs to be considered in the continuous-time step model. To this end, this article focuses on incorporating Limited Liability into the bank's decision problem. 

In this work, we have shown the benefit of combining the Limited Liability in the decision problem. In particular, we have constructed two models to maximize the utility of the terminal wealth, one of which is with Limited Liability and the other is without Limited Liability. The case with Limited Liability is a continuous but non-differentiable function. Therefore, in order to show the effect of the inclusion of Limited Liability, we have approximated it with a $C^{2}$-function, to determine the optimal strategy. Finally, we have proved that the model with Limited Liability leads to less incorporation of risky assets. Banks and other financial firms are subjected to Limited Liability protection and hence the incorporation of Limited Liability reflects a more real-world scenario.

\bibliography{BIBLIO_4}
\bibliographystyle{siam}

\end{document}